\newtheorem{definition}{Definition}
\newtheorem{property}{Property}
\newtheorem{theorem}{Theorem}
\begin{document}
%
\title{Structural Deep Embedding for Hyper-Networks}

\author{Ke Tu$^{1}$, Peng Cui$^{1}$, Xiao Wang$^{1}$, Fei Wang$^{2}$, Wenwu Zhu$^{1}$\\
	$^{1}$ Department of Computer Science and Technology, Tsinghua University\\
	$^{2}$ Department of Healthcare Policy and Research, Cornell University\\
	\normalsize{tuke1993@gmail.com, cuip@tsinghua.edu.cn, wangxiao007@mail.tsinghua.edu.cn}\\
	\normalsize{feiwang03@gmail.com,wwzhu@tsinghua.edu.cn}\\
}

\maketitle
\begin{abstract}
Network embedding has recently attracted lots of attentions in data mining. Existing network embedding methods mainly focus on networks with pairwise relationships. In real world, however, the relationships among data points could go beyond pairwise, i.e., three or more objects are involved in each relationship represented by a hyperedge, thus forming hyper-networks. These hyper-networks pose great challenges to existing network embedding methods when the hyperedges are indecomposable, that is to say, any subset of nodes in a hyperedge cannot form another hyperedge. These indecomposable hyperedges are especially common in heterogeneous networks. In this paper, we propose a novel {\em Deep Hyper-Network Embedding (DHNE)} model to embed hyper-networks with indecomposable hyperedges. More specifically, we theoretically prove that any linear similarity metric in embedding space commonly used in existing methods cannot maintain the indecomposibility property in hyper-networks, and thus propose a new deep model to realize a non-linear tuplewise similarity function while preserving both local and global proximities in the formed embedding space. We conduct extensive experiments on four different types of hyper-networks, including a GPS network, an online social network, a drug network and a semantic network. The empirical results demonstrate that our method can significantly and consistently outperform the state-of-the-art algorithms.
\end{abstract}

\section{Introduction}
Nowadays, networks are widely used to represent the rich relationships of data objects in various domains, forming social networks, biology networks, brain networks, etc. Many methods are proposed
for network analysis, among which network embedding
methods~\cite{tang2015line,wang2016structural,deng2016discriminative,ou2015non} arouse more and more interests in recent years. Most of the existing network embedding methods are designed for conventional pairwise networks, where each edge links only a pair of nodes. However, in real world applications, the relationships among data objects are much more complicated and they typically go beyond pairwise. For example, John purchasing a shirt with cotton material forms a high-order relationship $\langle$John, shirt, cotton$\rangle$. 
The network capturing those high-order node relationships is usually referred to as a hyper-network.


A typical way to analyze hyper-network is to expand them into conventional pairwise networks and then apply the analytical algorithms developed on pairwise networks. Clique expansion~\cite{sun2008hypergraph} (Figure~\ref{fig:hypergraph} (c)) and star expansion~\cite{agarwal2006higher} (Figure~\ref{fig:hypergraph} (d)) are two representative techniques to achieve such a goal. In clique expansion, each hyperedge is expanded as a clique. In star expansion, a hypergraph is transformed into a bipartite graph where each hyperedge is represented by an instance node which links to the original nodes it contains. 
These methods assume that the hyperedges are {\em decomposable} either explicitly or implicitly. That is to say, if we treat a hyperedge as a set of nodes, then any subset of nodes in this hyperedge can form another hyperedge. In a homogeneous hyper-network, this assumption is reasonable, as the formation of hyperedges are, in most cases, caused by the latent similarity among the involved objects such as common labels. However, when learning the heterogeneous hyper-network embedding, we need to address the following new requirements.

\begin{enumerate}
	\item \textbf{Indecomposablity}: The hyperedges in heterogeneous hyper-networks are usually indecomposable. In this case, a set of nodes in a hyperedge has a strong relationship, while the nodes in its subset does not necessarily have a strong relationship. For example, in the recommendation system with $\langle$user, movie, tag$\rangle$ relationships, the $\langle$user, tag$\rangle$ relationships are not typically strong. This means that we cannot simply decompose hyperedges using those traditional expansion methods.
	
	\item \textbf{Structure Preserving}:
	The local structures are preserved by the observed relationships in network embedding. However, due to the sparsity of networks, many existing relationships are not observed. It is not sufficient for preserving hyper-network structure using only local structures. And global structures, e.g. the neighborhood structure, are required to address the sparsity problem. How to capture and preserve both local and global structures simultaneously in a hyper-network is still an unsolved problem.
\end{enumerate}


To address \textbf{Indecomposablity} issue, we design an indecomposable tuplewise similarity function. The function is directly defined over all the nodes in a hyperedge, ensuring that the subsets of a hyperedge are not incorporated in network embedding. We theoretically prove that the indecomposable tuplewise similarity function can not be a linear function. Therefore, we realize the tuplewise similarity function with a deep neural network and add a non-linear activation function to make it highly non-linear. To address \textbf{Structure Preserving} issue, we design a deep autoencoder to learn node representations by reconstructing neighborhood structures, ensuring that the nodes with similar neighborhood structures will have similar embeddings. The tuplewise similarity function and deep autoencoder are jointly optimized to simultaneously address the two issues.


\begin{figure}
	\includegraphics[width=\linewidth]{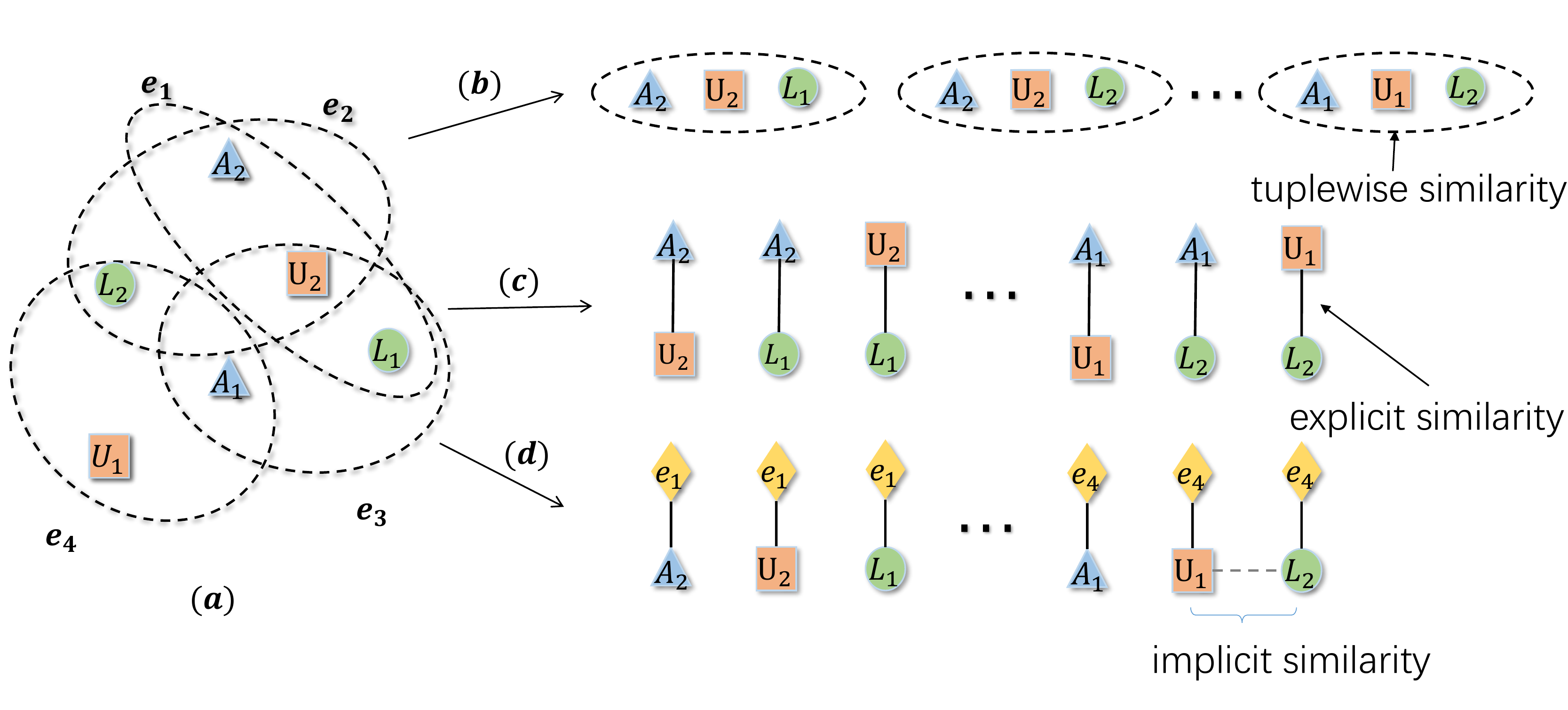}
	\caption{\small (a) An example of a hyper-network. (b) Our method. (c) The clique expansion. (d) The star expansion. Our method models the hyperedge as a whole and the tuplewise similarity is preserved. In clique expansion, each hyperedge is expanded into a clique. Each pair of nodes has explicit similarity. As for the star expansion, each node in one hyperedge links to a new node which stands for the origin hyperedge. Each pair of nodes in the origin hyperedge has implicit similarity for the reason that they link to the same node.}
	\label{fig:hypergraph}
\end{figure}

It is worthwhile to highlight the following contributions of this paper:
\begin{itemize}
	\item We investigate the problem of {\em indecomposable hyper-network embedding}, where indecomposibility of hyperedges is a common property in hyper-networks but largely ignored in literature.
	We propose a novel deep model, named Deep Hyper-Network Embedding (DHNE), to learn embeddings	for the nodes in heterogeneous hyper-networks, which can simultaneously address indecomposable hyperedges while preserving rich structural information. The complexity of this method is linear to the number of node and it can be used in large scale networks
	\item We theoretically prove that any linear similarity metric in embedding space cannot maintain the indecomposibility property in hyper-networks, and thus propose a novel deep model to simultaneously maintain the indecomposibility as well as the local and global structural information in hyper-networks.
	\item We conduct experiments on four real-world information
	networks. The results demonstrate the effectiveness and efficiency of the proposed model.
\end{itemize}

The remainder of this paper is organized as follows. In the next
section, we review the related work. Section 3 gives the preliminaries and formally defines our problem. In Section 4,  we introduce the proposed model in details. Experimental results are presented in section 5. Finally, we conclude in Section 6.
\section{Related work}
Our work is related to network embedding which aims to learn low-dimension representations for networks. Earlier works, such as Local Linear Embedding (LLE)~\cite{roweis2000nonlinear}, Laplacian eigenmaps~\cite{belkin2001laplacian} and IsoMap~\cite{tenenbaum2000global}, are based on matrix factorization. They express a network as a matrix where the entries represent relationships and calculate the leading eigenvectors as network representations. Eigendecomposition is a very expensive operation so these methods cannot efficiently scale to large real world networks. Recently, DeepWalk~\cite{perozzi2014deepwalk} learns latent representations of nodes in a network by modeling a stream of short random walks. LINE~\cite{tang2015line} optimizes an objective function which aims to preserve both the first-order and second-order proximities of networks. HOPE~\cite{ou2016asymmetric} extends the work to utilize higher-order information and M-NMF~\cite{wang2017community} incorporates the community structure into network embedding. Furthermore, due to the powerful representation ability of deep learning~\cite{niepert2016learning}, several network embedding methods based on deep learning ~\cite{chang2015heterogeneous,wang2016structural} have been proposed. ~\cite{wang2016structural} proposes a deep model with a semi-supervised architecture, which simultaneously optimizes the first-order and second-order proximity. ~\cite{chang2015heterogeneous} employs deep model to transfer different objects in heterogeneous networks to unified vector representations. 

However, all of the above methods assume pairwise relationships among objects in real world networks. In view of the aforementioned facts, a series of methods~\cite{zhou2006learning,liu2013hypergraph,wu2010multiple} is proposed by generalizing spectral clustering techniques~\cite{ng2001spectral} to hypergraphs. Nevertheless, these methods focus on homogeneous hypergraph. They construct hyperedge by latent similarity like common label and preserve hyperedge implicitly. Therefore, they cannot preserve the structure of indecomposable hyperedges. For heterogeneous hyper-network, the tensor decomposition~\cite{kolda2009tensor,rendle2010pairwise,symeonidis2016matrix} may be directly applied to learn the embedding. Unfortunately, the time cost of tensor decomposition is usually very expensive so it cannot scale efficiently to large network. Besides,  HyperEdge Based Embedding (HEBE)~\cite{gui2016large} is proposed to model the proximity among participating objects in each heterogeneous event as a hyperedge based on prediction. It does not take high-order network structure and high degree of sparsity into account which affects the predictive performance in our task.

\section{Notations and Definitions}
In this section, we define the problem of hyper-network embedding. The key notations used in this paper are shown in Table~\ref{tab:notations}. First we give the definition of a hyper-network.

\begin{table}
	\caption{\small Notations.}
	\label{tab:notations}
	\resizebox{\linewidth}{!}{
	\begin{tabular}{|c|c|}
		\toprule
		Symbols & Meaning \\
		\midrule
		$T$ & number of node types \\
		$\mathbf{V} = \{\mathbf{V}_t\}_{t=1}^T$ & node set \\
		$\mathbf{E} = \{(\emph{v}_1, \emph{v}_2, ..., \emph{v}_{n_i})\}$ & hyperedge set \\ 
		$\mathbf{A}$ &  adjacency matrix of hyper-network \\
		$\mathbf{X}_i^{\emph{j}}$ &  embedding of node $i$ with type $\emph{j}$ \\
		$\mathcal{S}(\mathbf{X}_1, \mathbf{X}_2,..., \mathbf{X}_N)$ & $N$-tuplewise similarity function \\
		$\mathbf{W}^{(i)}_j$ & the i-th layer weight matrix with type j \\
		$\mathbf{b}^{(i)}_j$ & the i-th layer biases with type j \\
		\bottomrule
	\end{tabular}}
\end{table}

\begin{definition}[Hyper-network]
	A \textbf{hyper-network} is defined as a hypergraph $\mathbf{G} = (\mathbf{V}, \mathbf{E})$ with the set of nodes $V$ belonging to $T$ types $\mathbf{V} = \{\mathbf{V}_t\}_{t=1}^T$ and the set of edges $\mathbf{E}$ which may have more than two nodes $\mathbf{E} = \{\emph{E}_i=(\emph{v}_1, \emph{v}_2, ..., \emph{v}_{n_i})\} (n_i \geq 2)$. If the number of nodes is 2 for each hyperedge, the hyper-network degenerates to a network. The type of edge $\emph{E}_i$ is defined as the combination of types of nodes belonging to the edge. If $T \geq 2$, the hyper-network is defined as a \textbf{heterogeneous hyper-network}.
\end{definition}

To obtain the embeddings in a hyper-network, the indecomposable tuplewise relationships need be preserved. We define the indecomposable structures as the first-order proximity of hyper-network:

\begin{definition}[The First-order Proximity of Hyper-network]
	\label{def:first-order}
	\textbf{The first-order proximity} of hyper-network measures the 
	N-tuplewise similarity between nodes. For any $N$ vertexes $\emph{v}_1, \emph{v}_2, ..., \emph{v}_{N}$, if there exists a hyperedge among these $N$ vertexes, the first-order proximity of these $N$ vertexes is defined as 1, but this implies no first-order proximity for any subsets of these N vertexes.
	
\end{definition}

The first-order proximity implies the indecomposable similarity of several entities in real world. Meanwhile, real world networks are always incomplete and sparse. Only considering first-order proximity is not sufficient for learning node embeddings. Higher order proximity needs to be considered to fix this issue. We then introduce the second-order proximity of hyper-network to capture the global structure.

\begin{definition}[The Second-order Proximity of Hyper-network]
	\textbf{The second-order Proximity} of hyper-network measures the proximity of two nodes with respect to their neighborhood structures. For any node $\emph{v}_i \in \emph{E}_i$, $\emph{E}_i/{\emph{v}_i}$ is defined as a \textbf{neighborhood} of $\emph{v}_i$. If $\emph{v}_i$'s neighborhoods $\{\emph{E}_i/{\emph{v}_i} ~for~any~\emph{v}_i \in \emph{E}_i\}$ are similar to $\emph{v}_j$'s, then $\emph{v}_i$'s embedding $\mathbf{x}_i$ should be similar to $\emph{v}_j$'s embedding $\mathbf{x}_j$.
\end{definition}

For example, in Figure \ref{fig:hypergraph}(a), $\emph{A}_1$'s neighborhoods set is $\{(\emph{L}_2, \emph{U}_1), (\emph{L}_1, \emph{U}_2)\}$. $\emph{A}_1$ and $\emph{A}_2$ have second-order similarity since they have common neighborhood, $(\emph{L}_1, \emph{U}_2)$.

\section{Deep Hyper-Network Embedding}

In this section, we introduce the proposed Deep Hyper-Network Embedding (DHNE). The framework is shown in Figure \ref{fig:framework}.

\subsection{Loss function}
To preserve the first-order proximity of a hyper-network, an $N$-tuplewise similarity measure in embedding space is required. If there exists a hyperedge among $N$ vertexes, the $N$-tuplewise similarity of these vertexes should be large, and small otherwise.

\begin{property}
	\label{property}
	We mark $\mathbf{X}_i$ as the embedding of node $v_i$ and $\mathcal{S}$ as $N$-tuplewise similarity function.
	\begin{itemize}
		\item if $(\emph{v}_1, \emph{v}_2, ..., \emph{v}_N) \in \mathbf{E}$, $\mathcal{S}(\mathbf{X}_1, \mathbf{X}_2, .., \mathbf{X}_N)$ should be large (without loss of generality, large than a threshold $l$). 
		\item if $(\emph{v}_1, \emph{v}_2, ..., \emph{v}_N) \notin \mathbf{E}$, $\mathcal{S}(\mathbf{X}_1, \mathbf{X}_2, .., \mathbf{X}_N)$ should be small (without loss of generality, smaller than a threshold $s$).
	\end{itemize}
\end{property}

\begin{figure} 
	\includegraphics[width=\linewidth]{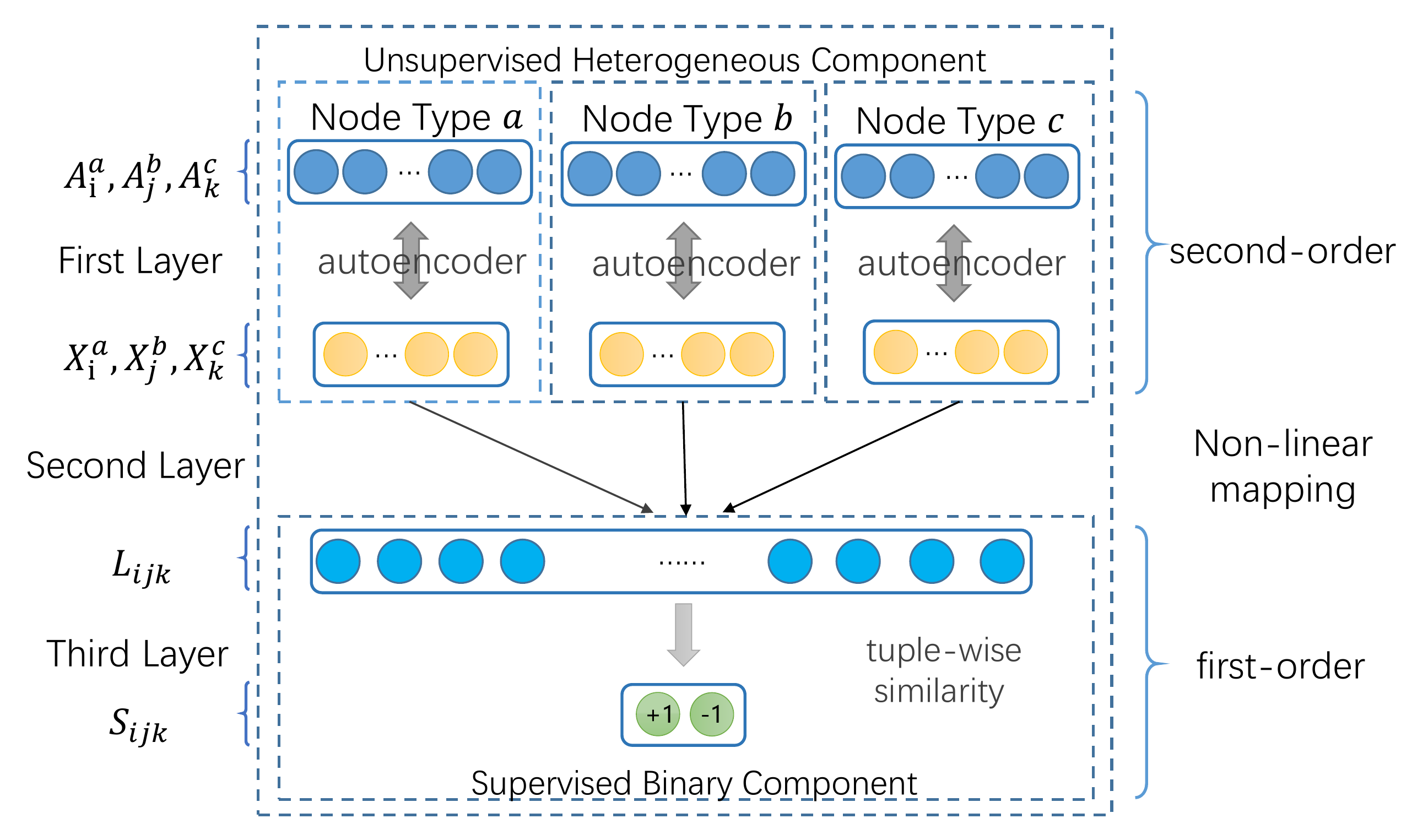}
	\caption{{\small Framework of Deep Hyper-Network Embedding.}}
	\label{fig:framework}
\end{figure}

In our model, we propose a data-dependent $N$-tuplewise similarity function. In this paper, we mainly focus on hyperedges with uniform length $N = 3$, but it is easy to extend to $N > 3$.

Here we provide the theorem to demonstrate that a linear tuplewise similarity function cannot satisfy Property \ref{property}.

\begin{theorem}
	Linear function $\mathcal{S}(\mathbf{X}_1, \mathbf{X}_2,..., \mathbf{X}_N) = \sum_i \mathbf{W}_i\mathbf{X}_i$ cannot satisfy Property \ref{property}.
	\label{theorem}
\end{theorem}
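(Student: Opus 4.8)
The plan is to derive a contradiction by constructing a small explicit configuration of hyperedges and non-hyperedges that the linear similarity function $\mathcal{S}(\mathbf{X}_1,\ldots,\mathbf{X}_N) = \sum_i \mathbf{W}_i \mathbf{X}_i$ cannot separate according to Property~\ref{property}. The essential observation is that the linear map $\mathcal{S}$ is additive in its arguments: if I fix all but one slot and vary the remaining embedding, $\mathcal{S}$ changes by a linear (affine, once the other slots are fixed) function of that embedding. This rigidity is what clashes with indecomposability, where membership in $\mathbf{E}$ is not a "coordinatewise" property.

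Concretely, I would specialize to $N=3$ (the paper's working case) and pick nodes so that certain tuples are hyperedges and others are not, then add up the corresponding inequalities. For instance, suppose $(v_1,v_2,v_3)\in\mathbf{E}$ and $(v_1',v_2',v_3')\in\mathbf{E}$, so that $\mathbf{W}_1\mathbf{X}_1 + \mathbf{W}_2\mathbf{X}_2 + \mathbf{W}_3\mathbf{X}_3 \geq l$ and $\mathbf{W}_1\mathbf{X}_1' + \mathbf{W}_2\mathbf{X}_2' + \mathbf{W}_3\mathbf{X}_3' \geq l$. If the "crossed" tuples $(v_1,v_2,v_3')$ and $(v_1',v_2',v_3)$ are not in $\mathbf{E}$ (which is exactly the kind of situation indecomposability permits — knowing two tuples are hyperedges tells you nothing about recombinations), then $\mathbf{W}_1\mathbf{X}_1 + \mathbf{W}_2\mathbf{X}_2 + \mathbf{W}_3\mathbf{X}_3' \leq s$ and $\mathbf{W}_1\mathbf{X}_1' + \mathbf{W}_2\mathbf{X}_2' + \mathbf{W}_3\mathbf{X}_3 \leq s$. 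Adding the two "large" inequalities and the two "small" inequalities and comparing, the left-hand sides are identical (both equal $\mathbf{W}_1\mathbf{X}_1 + \mathbf{W}_1\mathbf{X}_1' + \mathbf{W}_2\mathbf{X}_2 + \mathbf{W}_2\mathbf{X}_2' + \mathbf{W}_3\mathbf{X}_3 + \mathbf{W}_3\mathbf{X}_3'$), which forces $2l \leq 2s$, contradicting $l > s$. One then remarks that the same cancellation works for general $N$ by swapping a single coordinate block between two hyperedges.

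The main obstacle — and the place where the argument needs care — is justifying that a configuration of nodes realizing this pattern of membership/non-membership actually exists, i.e. that the theorem is not vacuous. Since the claim is that \emph{no} linear $\mathcal{S}$ works for \emph{all} hyper-networks, it suffices to exhibit one hyper-network (with one valid embedding-independent labeling of tuples as edges/non-edges) on which the above swap pattern occurs; a minimal heterogeneous example with two hyperedges sharing no recombination, as in Figure~\ref{fig:hypergraph}(a), does the job. I would state this existence explicitly so the contradiction is clean. A secondary, purely cosmetic point is handling a possible bias term or the precise form of $\mathbf{W}_i$ (matrix versus covector); the additive cancellation is unaffected, so I would simply note it in passing rather than belabor it.
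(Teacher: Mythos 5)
Your proposal is correct and is essentially the paper's own argument: the paper constructs exactly this configuration (two hyperedges given by cluster-0 and cluster-1 triples, with the single-coordinate-swapped tuples as non-edges), adds the four resulting inequalities, and derives the same cancellation forcing $l-s<0$. Your additional care in explicitly justifying that such a membership pattern is realizable is a welcome refinement of what the paper does implicitly via its cluster construction, but it does not change the route.
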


\begin{proof}
	To prove it by contradiction, we assume that theorem~\ref{theorem} is false, i.e., the linear function $\mathcal{S}$ satisfies Property~\ref{property}. We suggest the following counter example. Assume we have 3 types of nodes, and each type of node has 2 clusters (0 and 1). There is a hyperedge if and only if 3 nodes from different types have the same cluster id. We use $\mathbf{Y}_i^{j}$ to represent embeddings of nodes with type $j$ in cluster $i$. By Property \ref{property}, we have
	\begin{small}
	\begin{align}
		\mathbf{W}_1\mathbf{Y}_0^1+\mathbf{W}_2\mathbf{Y}_0^2+\mathbf{W}_3\mathbf{Y}_0^3 > l \label{equ:ce1}&\\
		\mathbf{W}_1\mathbf{Y}_1^1+\mathbf{W}_2\mathbf{Y}_0^2+\mathbf{W}_3\mathbf{Y}_0^3 < s \label{equ:ce2}&\\
		\mathbf{W}_1\mathbf{Y}_1^1+\mathbf{W}_2\mathbf{Y}_1^2+\mathbf{W}_3\mathbf{Y}_1^3 > l \label{equ:ce3}&\\
		\mathbf{W}_1\mathbf{Y}_0^1+\mathbf{W}_2\mathbf{Y}_1^2+\mathbf{W}_3\mathbf{Y}_1^3 < s \label{equ:ce4}&.
	\end{align}
	\end{small}
	By combining Equation (\ref{equ:ce1})(\ref{equ:ce2})(\ref{equ:ce3})(\ref{equ:ce4}), we get $\mathbf{W}_1*(\mathbf{Y}_0^1-\mathbf{Y}_1^1) > l-s$ and $\mathbf{W}_1*(\mathbf{Y}_1^1-\mathbf{Y}_0^1) > l-s$, which is a contradiction.
	
	This completes the proof.
\end{proof}

Above theorem shows that N-tuplewise similarity function $\mathcal{S}$ should be in a non-linear form. This motivates us to model it by a multilayer perceptron. The multilayer perceptron is composed of two parts, which are shown separately in the second layer and third layer of Figure \ref{fig:framework}. The second layer is a fully connected layer with non-linear activation functions. With the input of the embeddings $(\mathbf{X}_i^\emph{a}, \mathbf{X}_j^\emph{b}, \mathbf{X}_k^\emph{c})$ of 3 nodes $(\emph{v}_i, \emph{v}_j, \emph{v}_k)$, we concatenate them and map them non-linearly to a common latent space $\mathbf{L}$. Their joint representation in latent space is shown as follows:

\begin{equation}
\mathbf{L}_{ijk} = \sigma(\mathbf{W}^{(2)}_\emph{a}*\mathbf{X}_i^\emph{a}+
\mathbf{W}^{(2)}_\emph{b}*\mathbf{X}_j^\emph{b}+
\mathbf{W}^{(2)}_\emph{c}*\mathbf{X}_k^\emph{c}+\mathbf{b}^{(2)}), 
\end{equation}
where $\sigma$ is the sigmoid function.

After obtaining the latent representation $\mathbf{L}_{ijk}$, we finally map it to a probability space in the third layer to get the similarity:

\begin{equation}
\mathbf{S}_{ijk} \equiv \mathcal{S}(\mathbf{X}^\emph{a}_i, \mathbf{X}^\emph{b}_j, \mathbf{X}^\emph{c}_k) = \sigma(\mathbf{W}^{(3)}*\mathbf{L}_{ijk}+\mathbf{b}^{(3)}).
\end{equation}

Combining the aforementioned two layers, we obtain a non-linear tuplewise similarity measure function $\mathcal{S}$. In order to make this similarity function satisfy Property \ref{property}, we present the objective function as follows:

\begin{equation}
\label{equ:first-order}
\mathcal{L}_1 = -(\mathbf{R}_{ijk}\log \mathbf{S}_{ijk}+(1-\mathbf{R}_{ijk})\log(1-\mathbf{S}_{ijk})),
\end{equation}
where $\mathbf{R}_{ijk}$ is defined as $1$ if there is a hyperedge between $\emph{v}_i$, $\emph{v}_j$ and $\emph{v}_k$ and $0$ otherwise. From the objective function, it is easy to check that if $\mathbf{R}_{ijk}$ equals to $1$, the similarity $\mathbf{S}_{ijk}$ should be large, and otherwise the similarity should be small. In other words, the first-order proximity is preserved. 

Next, we consider to preserve the second-order proximity. The first layer of Figure \ref{fig:framework} is designed to preserve the second-order proximity. Second-order proximity measures neighborhood structure similarity. Here, we define the adjacency matrix of hyper-network to capture the neighborhood structure. First, we give some basic definitions of hypergraph. For a hypergraph $\mathbf{G} = (\mathbf{V}, \mathbf{E})$, a $|\mathbf{V}|*|\mathbf{E}|$ incidence matrix $\mathbf{H}$ with entries $\mathbf{h}(\emph{v}, \emph{e}) = 1$ if $\emph{v} \in \emph{e}$ and 0 otherwise, is defined to represent the hypergraph. For a vertex $\emph{v} \in \mathbf{V}$, the degree of vertex is defined by $d(v) = \sum_{\emph{e} \in \mathbf{E}} \mathbf{h}(\emph{v}, \emph{e})$. Let $\mathbf{D}_v$ denote the diagonal matrix containing the vertex degree. Then the adjacency matrix $\mathbf{A}$ of hypergraph $\mathbf{G}$ can be defined as $\mathbf{A} = \mathbf{H}\mathbf{H}^T-\mathbf{D}_v$, where $\mathbf{H}^T$ is the transpose of $\mathbf{H}$. The entries of adjacency matrix $\mathbf{A}$ denote the concurrent times between two nodes, and the i-th row of adjacency matrix $\mathbf{A}$ shows the neighborhood structure of vertex $\emph{v}_i$. We use an adjacency matrix $\mathbf{A}$ as our input feature and an autoencoder~\cite{lecun2015deep} as the model to preserve the neighborhood structure. The autoencoder is composed by an encoder and a decoder. The encoder is a non-linear mapping from feature space $\mathbf{A}$ to latent representation space $\mathbf{X}$ and the decoder is a non-linear mapping from latent representation $\mathbf{X}$ space back to origin feature space $\hat{\mathbf{A}}$, which is shown as follows:
\begin{align}
	\mathbf{X}_i &= \sigma(\mathbf{W}^{(1)}*\mathbf{A}_i+\mathbf{b}^{(1)}) \label{equ:encoder} \\ 
	\hat{\mathbf{A}}_i &= \sigma(\hat{\mathbf{W}}^{(1)}*\mathbf{X}_i+\hat{\mathbf{b}}^{(1)}). \label{equ:decoder}
\end{align}

The goal of autoencoder is to minimize the reconstruction error between the input and the output. The autoencoder's reconstruction process will make the nodes with similar neighborhoods have similar latent representations, and thus the second-order proximity is preserved. It is noteworthy that the input feature is the adjacency matrix of the hyper-network, and the adjacency matrix is often extremely sparse. To speed up our model, we only reconstruct non-zero element in the adjacency matrix. The reconstruction error is shown as follows:
\begin{equation}
	||sign(\mathbf{A}_i)\odot(\mathbf{A}_i - \hat{\mathbf{A}}_i)||_F^2,
\end{equation}
where $sign$ is the sign function.

Furthermore, in hyper-networks, the vertexes often have various types, forming heterogeneous hyper-networks. Considering the special characteristics of different types of nodes, it is required to learn unique latent spaces for different node types. In our model, each heterogeneous type of entities have their own autoencoder model as shown in Figure \ref{fig:framework}. Then for all types of nodes, the loss function is defined as:

\begin{equation}
\label{equ:second-order}
\mathcal{L}_2 = \sum_{t} ||sign(\mathbf{A}_i^t)\odot(\mathbf{A}_i^t - \hat{\mathbf{A}}_i^t)||_F^2,
\end{equation}
where t is the index for node types.

To preserve both first-order proximity and second-order proximity of heterogeneous hyper-networks, we jointly minimize the objective function by combining Equation \ref{equ:first-order} and Equation \ref{equ:second-order}:

\begin{equation}
\label{equ:obj}
\mathcal{L} = \mathcal{L}_1+\alpha\mathcal{L}_2.
\end{equation}

\subsection{Optimization}
We use stochastic gradient descent (SGD) to optimize the model. The key step is to calculate the partial derivative of the parameters $\theta = \{\mathbf{W}^{(i)}, \mathbf{b}^{(i)}, \mathbf{\hat{W}}^{(i)}, \mathbf{\hat{b}}^{(i)}\}_{i=1}^3$. These derivatives can be easily estimated by using back-propagation algorithm~\cite{lecun2015deep}. Notice that there is only positive relationship in most real world network, so this algorithm may converge to a trivial solution where all tuplewise relationships are similar. To address this problem, we sample multiple negative edges based on noisy distribution for each edge, as in ~\cite{mikolov2013distributed}. The whole algorithm is shown in Algorithm~\ref{algorithm}.

\begin{algorithm}
	\caption{\small The Deep Hyper-Network Embedding (DHNE)}
	\label{algorithm}
	\begin{algorithmic}[1]
		\REQUIRE the hyper-network $\mathbf{G} = (\mathbf{V}, \mathbf{E})$ with adjacency matrix $\mathbf{A}$, the parameter $\alpha$
		\ENSURE Hyper-network Embeddings $E$ and updated Parameters $\theta = \{\mathbf{W}^{(i)}, \mathbf{b}^{(i)}, \mathbf{\hat{W}}^{(i)}, \mathbf{\hat{b}}^{(i)}\}_{i=1}^3$
		\STATE initial parameters $\theta$ by random process
		\WHILE{the value of objective function do not converge} 
		\STATE generate next batch from the hyperedge set $\mathbf{E}$
		\STATE sample negative hyperedge randomly
		\STATE calculate partial derivative $\partial\mathcal{L}/\partial \theta$ with back-propagation algorithm to update $\theta$.
		\ENDWHILE 
	\end{algorithmic}
\end{algorithm}

\subsection{Analysis}
In this section, we present the out-of-sample extension and complexity analysis.

\subsubsection{Out-of-sample extension}
For a newly arrived vertex $\emph{v}$, we can easily obtain its adjacency vector by its connections to existing vertexes. We feed its adjacency vector into the specific autoencoder corresponding to its type, and apply Equation \ref{equ:encoder} to get representation for vertex $v$. The complexity for such steps is $\mathcal{O}(d_vd)$, where $d_v$ is the degree of vertex $\emph{v}$ and $d$ is the dimensionality of the embedding space..
\subsubsection{Complexity analysis}
During the training procedure, the time complexity of calculating gradients and updating parameters is $O((nd+dl+l)bI)$, where $n$ is the number of nodes, $d$ is the dimension of embedding vectors, $l$ is the size of latent layer, $b$ is the batch size and $I$ is the number of iterations. Parameter $l$ is usually related to the dimension of embedding vectors $d$ but independent with the number of vertexes $n$. The batch size is normally a small number. The number of iterations is also not related with the number of vertexes $n$. Therefore, the complexity of training procedure is linear to the number of vertexes.

\section{Experiment}
In this section, we evaluate our proposed method  on several real world datasets and multiple application scenarios.
\subsection{Datasets}
In order to comprehensively evaluate the effectiveness of our proposed method, we use four different types of datasets, including a GPS network, a social network, a medicine network and a semantic network. The detailed information is shown as follows.
\begin{itemize}
	\item GPS~\cite{zheng2010collaborative}: The dataset describes a user  joins in an activity in certain location. The (user, location, activity) relations are used for building the hyper-network.
	\item MovieLens~\cite{harper2016movielens}: This dataset describes personal tagging activity from MovieLens\footnote{https://movielens.org/}. Each movie is labeled by at least one genres. The (user, movie, tag) relations are considered as the hyperedges to form a hyper-network.
	\item drug\footnote{http://www.fda.gov/Drugs/}: This dataset is obtained from FDA Adverse Event Reporting System (FAERS). It contains information on adverse event and medication error reports submitted to FDA. We construct hyper-network by (user, drug, reaction) relationships, i.e., a user who has certain reaction  and takes some drugs will lead to adverse event.
	\item wordnet~\cite{bordes2013translating}: This dataset consists of a collection of triplets (synset, relation type, synset) extracted from WordNet 3.0. We can construct the hyper-network by regarding head entity, relation, tail entity as three types of nodes and the triplet relationships as hyperedges.
\end{itemize}
The detailed statistics of the datasets are summarized in Table \ref{tab:datasets}.

\begin{table}
	\caption{\small Statistics of the datasets.}
	\label{tab:datasets}
	\resizebox{\linewidth}{!}{
			\begin{tabular}{|c|c|c|c|c|c|c|c|}
				\toprule
				datasets & \multicolumn{3}{c|}{node type} & \multicolumn{3}{c|}{\#(V)} & \#(E) \\
				\midrule
				GPS & user & location & activity & 146 & 70 & 5 & 1436 \\
				MovieLens & user & movie & tag & 2113 & 5908 & 9079 & 47957 \\
				drug & user & drug & reaction & 12 & 1076 & 6398 & 171756 \\
				wordnet & head & relation & tail & 40504 & 18 & 40551 & 145966 \\
				\bottomrule
		\end{tabular} }
\end{table}

\subsection{Parameter Settings}
We compared DHNE against the following six widely-used algorithms: DeepWalk~\cite{perozzi2014deepwalk}, LINE~\cite{tang2015line}, node2vec~\cite{grover2016node2vec}, Spectral Hypergraph Embedding (SHE)~\cite{zhou2006learning}, Tensor decomposition~\cite{kolda2009tensor} and HyperEdge Based Embedding (HEBE)~\cite{gui2016large}.

In summary, DeepWalk, LINE and node2vec are conventional pairwise network embedding methods. In our experiment, we use clique expansion in Figure \ref{fig:hypergraph}(c) to transform a hyper-network in a conventional network, and then use these three methods to learn node embeddings from the conventional networks. SHE is designed for homogeneous hyper-network embeddings. Tensor method is a direct way for preserving high-order relationship in heterogeneous hyper-network. HEBE learns node embeddings for heterogeneous event data. Note that DeepWalk, LINE , node2vec and SHE can only measure pairwise relationship. In order to make them applicable to network reconstruction and link prediction in hyper-networks, without loss of generality, we use the mean or minimum value among all pairwise similarities in a candidate hyperedge to represent the tuplewise similarity of the hyperedge. For DeepWalk and node2vec, we set window size as 10, walk length as 40, walks per vertex as 10. For LINE, we set the number of negative samples as 5.

We uniformly set the representation size as 64 for all methods. Specifically, for DeepWalk and node2vec, we set window size as 10, walk length as 40, walks per vertex as 10. For LINE, we set the number of negative samples as 5.

For our model, we use one-layer autoencoder to preserve hyper-network structure and one-layer fully connected layer to learn tuplewise similarity function. The size of hidden layer of autoencoder is set as 64 which is also the representation size.  The size of fully connect layer is set as sum of the embedding length from all types, 192. We do grid search from $\{0.01, 0.1, 1, 2, 5, 10\}$ to tune the parameter $\alpha$ which is shown in {\em Parameter Sensitivity} section. Similar to LINE~\cite{tang2015line}, the learning rate is set with the starting value $\rho_0 = 0.025$ and decreased linearly with the times of iterations.

\begin{table}
	\caption{\small AUC value for network reconstruction.}
	\label{tab:reconstrction_AUC}
	\resizebox{\linewidth}{!}{
	\begin{tabular}{|c|c|c|c|c|c|}
		\toprule
		\multicolumn{2}{|c|}{methods} & GPS & MovieLens & drug & wordnet \\
		\midrule
		\multicolumn{2}{|c|}{DHNE} & \textbf{0.9598} & \textbf{0.9344} & \textbf{0.9356} & \textbf{0.9073} \\
		\midrule
		\multirow{4}{*}{mean} & deepwalk & 0.6714 & 0.8233 & 0.5750 & 0.8176 \\
		& line & 0.8058 & 0.8431 & 0.6908 & 0.8365 \\
		& node2vec & 0.6715 & 0.9142 & 0.6694 & 0.8609 \\
		& SHE & 0.8596 & 0.7530 & 0.5486 & 0.5618 \\
		\midrule
		\multirow{3}{*}{min} & deepwalk & 0.6034 & 0.7117 & 0.5321 & 0.7423 \\
		& line & 0.7369 & 0.7910 & 0.7625 & 0.7751 \\
		& node2vec & 0.6578  & 0.9100 & 0.6557  & 0.8387 \\
		& SHE & 0.7981 & 0.7972 & 0.6236 & 0.5918 \\
		\midrule
		\multicolumn{2}{|c|}{tensor} & 0.9229 & 0.8640 & 0.7025 & 0.7771 \\
		\multicolumn{2}{|c|}{HEBE} & 0.9337 & 0.8772 &  0.8236 & 0.7391 \\
		\bottomrule
	\end{tabular}}
\end{table}

\subsection{Network Reconstruction}
A good network embedding method should preserve the original network structure well in the embedding space. We first evaluate our proposed algorithm on network reconstruction task. We use the learned embeddings to predict the links of origin networks. The AUC (Area Under the Curve)~\cite{fawcett2006introduction} is used as the evaluation metric. 
The results are show in Table \ref{tab:reconstrction_AUC}.

From the results, we have the following observations:
\begin{itemize}
	\item Our method achieves significant improvements on AUC values over the baselines on all four datasets. It demonstrates that our method is able to preserve the origin network structure well.
	\item Compared with the baselines, our method achieves higher improvements in sparse drug and wordnet datasets than those on GPS and MovieLens datasets. It indicates the robustness of proposed	methods on sparse datasets.
	\item The results of DHNE perform better than DeepWalk, LINE and SHE which assume that high-order relationships are decomposable. It demonstrates the importance of preserving indecomposable hyperedge in hyper-network.
\end{itemize}

\subsection{Link Prediction}
Link prediction is a widely-used application in real world especially in recommendation systems. In this section, we fulfil two link prediction tasks on all the four datasets. The two tasks evaluate the overall performance and the performance with different sparsity of the networks, respectively. We calculate AUC value as in network reconstruction task to evaluate the performance.

For the first task, we randomly hide 20 percentage of existing edges and use the left network to train hyper-network embedding models. After training, we obtain embedding for each node and similarity function for N nodes and apply the similarity function to predict the held-out links. For GPS dataset which is a small and dense dataset, we can draw ROC curve for this task to observe the performance at various threshold settings, as shown in Figure \ref{fig:link_prediction_all} left. The AUC results on all datasets are shown in Table \ref{tab:link_prediction_AUC}. The observations are illustrated as follows:
\begin{itemize}
	\item Our method achieves significant improvements over the baselines on all the datasets. It demonstrates the learned embeddings of our method have strong predictive power for unseen links.
	\item By comparing the performance of LINE, DeepWalk, SHE and DHNE, we can observe that transforming the indecomposable high-order relationship into multiple pairwise relationship will damage the predictive power of the learned embeddings.
	\item As Tensor and HEBE can somewhat address the indecomposibility of hyperedges, the large improvement margin of our method over these two methods clearly demonstrates the importance of second-order proximities in hyper-network embedding.
\end{itemize}

For the second task, we change the sparsity of network by randomly hiding different ratios of existing edges and repeat the previous task. Particularly, we conduct this task on the drug dataset as it has the most hyperedges. The ratio of remained edges is selected from 10\% to 90\%. The results are shown in Figure \ref{fig:link_prediction_all} right.


\begin{figure}
	\includegraphics[width=1.0\linewidth]{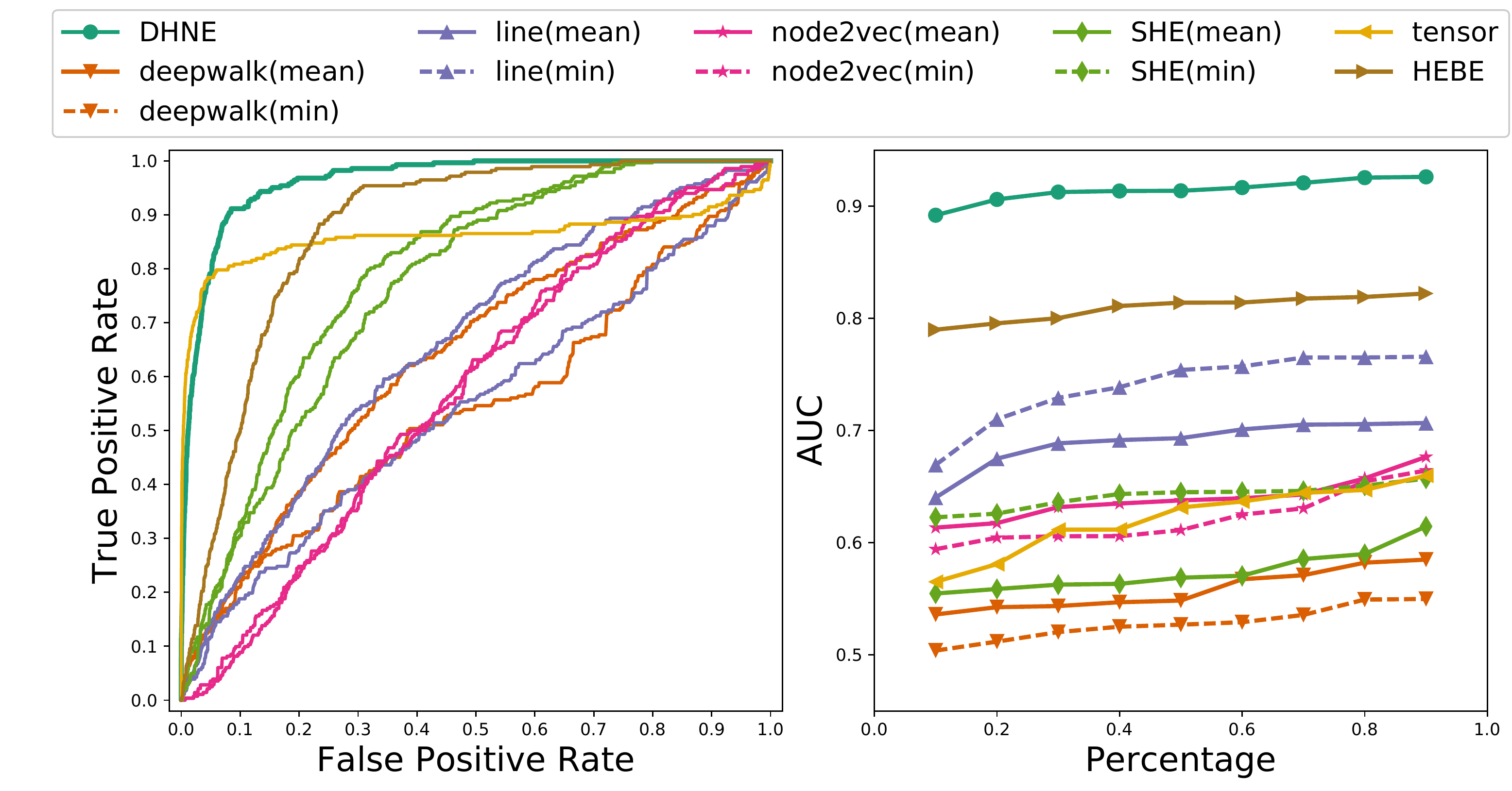}
	\caption{\small left: ROC curve on GPS; right: Performance for link prediction on networks of different sparsity.}
	\label{fig:link_prediction_all}
\end{figure}

We can observe that DHNE has a significant improvements over the best baselines on all sparsity of networks. It demonstrates the effectiveness of DHNE on sparse networks.


\begin{table}
	\caption{\small AUC value for link prediction.}
	\label{tab:link_prediction_AUC}
	\resizebox{\linewidth}{!}{
	\begin{tabular}{|c|c|c|c|c|c|}
		\toprule
		\multicolumn{2}{|c|}{methods} & GPS & MovieLens & drug & wordnet \\
		\midrule
		\multicolumn{2}{|c|}{DHNE} & \textbf{0.9166} & \textbf{0.8676} & \textbf{0.9254} & \textbf{0.8268} \\
		\midrule
		\multirow{4}{*}{mean} & deepwalk & 0.6593 & 0.7151 & 0.5822 & 0.5952 \\
		& line & 0.7795 & 0.7170 & 0.7057 & 0.6819 \\
		& node2vec & 0.5835 & 0.8211　& 0.6573 & 0.8003 \\
		& SHE & 0.8687 & 0.7459 & 0.5899 & 0.5426 \\
		\midrule
		\multirow{4}{*}{min} & deepwalk & 0.5715 & 0.6307 & 0.5493 & 0.5542 \\
		& line & 0.7219 & 0.6265 &  0.7651 & 0.6225 \\
		& node2vec & 0.5869 &　0.7675 & 0.6546 & 0.7985 \\
		& SHE & 0.8078 & 0.8012 & 0.6508 & 0.5507 \\
		\midrule
		\multicolumn{2}{|c|}{tensor} & 0.8646 & 0.7201 & 0.6470 & 0.6516 \\
		\multicolumn{2}{|c|}{HEBE} & 0.8355 & 0.7740 &  0.8191 & 0.6364 \\
		\bottomrule
	\end{tabular}}
	
\end{table}

\subsection{Classification}
In this section, we conduct multi-label classification~\cite{bhatia2015sparse} on MovieLens dataset and multi-class classification in wordnet, because only these two datasets have label or category information. After deriving the node embeddings from different methods, we choose SVM as the classifier. For MovieLens dataset, we randomly sample 10\% to 90\% of the vertexes as the training samples and use the left vertexes to test the performance. For wordnet dataset, the portion of training data is selected from 1\% to 10\%. Besides, we remove the nodes without labels on these two datasets. Averaged Macro-F1 and Micro-F1 are used to evaluate the performance. The results are shown in Figure \ref{fig:classification}.

\begin{figure}
	\includegraphics[width=1.0\linewidth]{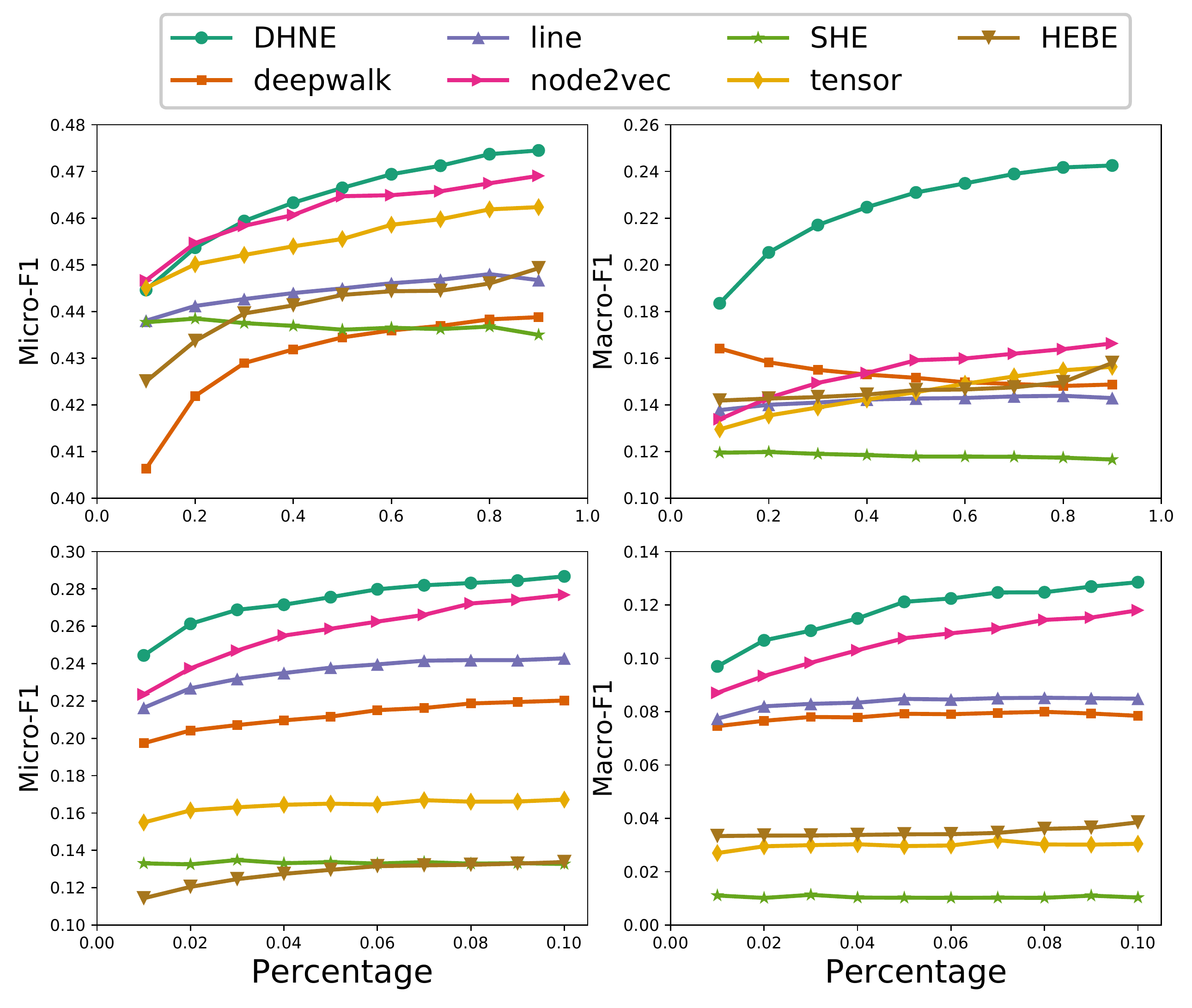}
	\caption{\small top: multi-label classification on MovieLens dataset; bottom: multi-class classification on wordnet dataset.}
	\label{fig:classification}
\end{figure}

%

From the results, we have following observations:
\begin{itemize}
	\item In both Micro-F1 and Macro-F1 curves, our method performs consistently better than baselines. It demonstrates the effectiveness of our proposed method in classification tasks.
	\item When the labelled data becomes richer, the relative improvement of our method is more obvious than baselines. Besides, as shown in Figure \ref{fig:classification} bottom, when the labeled data is quite sparse, our method still outperforms the baselines. This demonstrates the robustness of our method.
\end{itemize}

\subsection{Parameter Sensitivity}
\label{sec:ps}
In this section, we investigate how parameters influence the performance and training time. Especially, we evaluate the effect of the ratio of first-order proximity loss and second-order proximity loss $\alpha$ and the embedding dimension $d$. For brevity, we report the results by link prediction task with drug dataset.


\begin{figure}
	\includegraphics[width=1.0\linewidth]{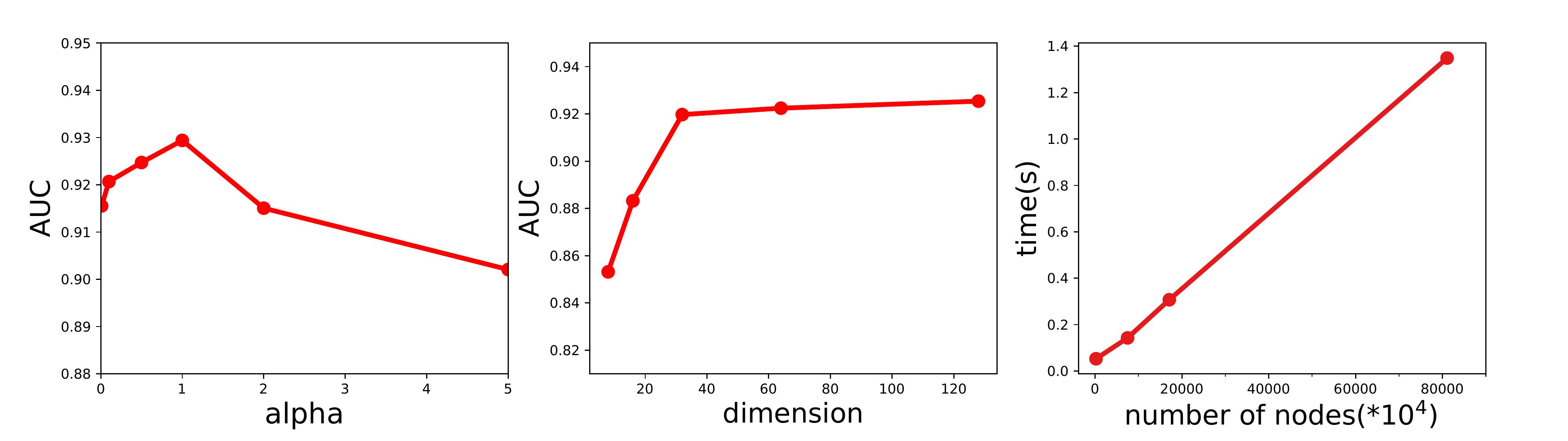}
	\caption{\small left, middle: Parameter w.r.t. embedding dimensions $d$, the value of $\alpha$; right: training time per batch w.r.t. embedding dimensions.}
	\label{fig:ps}
\end{figure}

\subsubsection{Effect of embedding dimension}
We show how the dimension of embedding space affects the performance in Figure \ref{fig:ps} left. We can see that the performance raises firstly when the number of embedding dimension increases. This is reasonable because higher embedding dimensions can embody more information of a  hyper-network. After the embedding dimension is larger than 32, the curve is relatively stable, demonstrating that our algorithm is not very sensitive to embedding dimension.

\subsubsection{Effect of parameter $\alpha$}
The parameter $\alpha$ measures the trade-off of the first-order proximity and second-order proximity of a hyper-network. We show how the values of $\alpha$ affect the performance in Figure \ref{fig:ps} middle. When $\alpha$ equals 0, only the first-order proximity is taken into account in our method. The performance with $\alpha$ between 0.1 and 2 is better than that of $\alpha = 0$,  demonstrating the importance of second-order proximity. The fact that the performance with $\alpha$ between 0.1 and 2 is better than that of $\alpha = 5$ can demonstrate the importance of the first-order proximity. In summary, both the first-order proximity and the second-order proximity are necessary for hyper-network embedding.

\subsubsection{Training time analysis}
To testify the scalability, we test training time per batch, as shown in Figure~\ref{fig:ps} right. We can observe that the training time scales linearly with the number of nodes. This results conform to the above complexity analysis and indicate the scalability of our model.

\section{Conclusion}
In this paper, we propose a novel deep model named DHNE to learn the low-dimensional representation for hyper-networks with indecomposible hyperedges. More specifically, we theoretically prove that any linear similarity metric in embedding space commonly used in existing methods cannot maintain the indecomposibility property in hyper-networks, and thus propose a new deep model to realize a non-linear tuplewise similarity function while preserving both local and global proximities in the formed embedding space. We conduct extensive experiments on four different types of hyper-networks, including a GPS network, an online social network, a drug network and a semantic network. The empirical results demonstrate that our method can significantly and consistently outperform the state-of-the-art algorithms.

\section{Acknowledgments}
This work is supported by National Program on Key Basic Research Project No. 2015CB352300, National Natural Science Foundation of China Major Project No. U1611461; National Natural Science Foundation of China No. 61772304, 61521002, 61531006, 61702296; NSF IIS-1650723 and IIS-1716432. Thanks for the research fund of Tsinghua-Tencent Joint Laboratory for Internet Innovation Technology, and the Young Elite Scientist Sponsorship Program by CAST. Peng Cui, Xiao Wang and Wenwu Zhu are the corresponding authors. All opinions, findings, conclusions and recommendations in this paper are those of the authors and do not necessarily reflect the views of the funding agencies.

\bibliographystyle{aaai}
\bibliography{ref}
\end{document}